\newtheorem{theorem}{Theorem}
\newtheorem{definition}{Definition}
\newtheorem{property}{Property}
\newcommand{\remove}[1]{}
\title{Detectable Quantum Byzantine Agreement for Any Arbitrary Number of Dishonest Parties}
\author{Vicent Cholvi \\ 
Departament de Llenguatges i Sistemes Inform\`atics \\
Universitat Jaume I, Castell\'o (Spain)}
\begin{document}

\maketitle

\begin{abstract}
Reaching agreement in the presence of arbitrary faults is a fundamental problem in distributed computation, which has been shown  to be unsolvable if one-third of the processes can fail, unless signed messages are used. In this paper, we propose a solution to a variation of the original BA problem, called Detectable Byzantine Agreement (DBA), that does not need to use signed messages. The proposed algorithm uses what we call \emph{$Q$-correlated lists}, which are generated by a \emph{quantum source device}. Once each process has one of these lists, they use them to reach the agreement in a classical manner. Although, in general, the agreement is reached by using $m+1$ rounds (where $m$ is the number of processes that can fail), if less than one-third of the processes fail it only needs one round to reach the agreement.
\end{abstract}

\section{Introduction}

Reaching agreement in the presence of arbitrary faults is a fundamental problem in distributed computation, which has been extensively studied in the past. This problem, also called as Byzantine agreement (BA), consists of several Byzantine generals who are commanding their army divisions to besiege an enemy city. They must decide upon a common plan of action, but they can communicate with one another only by pairwise error-free classical channels. One of the generals, the \emph{commanding general}, must decide on a plan of action and communicate it to the other generals. However, some of the generals, including the \emph{commander} can be dishonest and try to prevent the honest generals from reaching agreement of the plan of action. Thus, the solution to the problem must satisfy:
\begin{description}
\item[IC1:]
All honest parties obey the same order.
\item[IC2:]
If the commanding general is honest, then every honest party obeys the order he sends.
\end{description}

In~\cite{10.1145/322186.322188}, it was shown that this problem is unsolvable if one-third of the generals are dishonest. In~\cite{10.1145/357172.357176}, the authors provided a solution that works for any number of dishonest generals. However, this algorithm and all the subsequent ones that works for any number of dishonest generals, require an authentication structure based on signed messages (e.g.,~\cite{dolev1983authenticated}).

On another hand, in~\cite{PhysRevLett.87.217901} the authors proposed a variation of the original BA problem, called Detectable Byzantine Agreement (DBA), which relaxes the above-mentioned IC1 and IC2 conditions so that all honest parties either perform the same action or all abort. The advantage of using DBA instead of BA is to avoid the use of signed messages. and, as it has been argued in~\cite{PhysRevLett.87.217901}, using DBA is enough for applications where robust tolerance to errors is not necessary and detection suffices.

The authors in~\cite{PhysRevLett.87.217901,PhysRevLett.98.020503,PhysRevLett.100.070504} presented quantum solutions to the DBA problem but for only three parties (the \emph{commander} and two generals).  In~\cite{10.1145/1060590.1060662}, a quantum solution has been proposed that considers any number of parties, but it assumes  that less than one-third of the parties will be dishonest. Another quantum solution that considers any number of parties has been presented in~\cite{IJTP:Quantum}, but also assumes  that less than one-third of the parties will be dishonest. As far as we know, there have been only two proposals to solve the DBA problem for any number of dishonest parties~\cite{Quantum:Takavoli,Sun_2020}, but their agreement solutions are not fully correct (see the Appendix~\ref{sec:counter-examples}).

\paragraph*{Our work}
In this paper, we propose a solution for the DBA problem, without using signed messages, for any number of dishonest generals, which we call \emph{parties}. For this task, we use $Q$-\emph{correlated lists}. Such lists are distributed to the parties by using a number of entangled quantum particles that are generated by a \emph{quantum source device}. Once each party has one of these lists, they use them to reach the agreement in a classical manner.  At this point, our proposed solution has two interesting features:
\begin{enumerate}
\item
On one hand, any forgery of the state of the above mentioned particles (and, therefore, in the $Q$-correlated lists) can be detected. 

\item
On the other hand, the option of abort is considered only in the distribution of the lists. Thus, in the agreement phase, our solution still enables full BA.
\end{enumerate}



The rest of the paper is structured as follows. In Section~\ref{sec:semi-correlated}, we define $Q$-correlated lists, In Section~\ref{sec:distribution}, we show how the above-mentioned $Q$-correlated lists can be distributed, so that any forgery of their states can be detected. In Section~\ref{sec:QBA-algorithm} we introduce an algorithm that, by using these lists, solves the BA problem in a classical manner without using any quantum resources. We end, in Section~\ref{sec:remarks}, with some open issues.

\section{Sets of $Q$-correlated lists}
\label{sec:semi-correlated}

 In this section, we introduce a data structure, which we call $Q$-\emph{correlated list}, that is the core of the BA algorithm presented in Section~\ref{sec:QBA-algorithm}. In Section~\ref{sec:distribution} we will show how, by using a number of entangled quantum particles, it is possible to provide each party (including the \emph{commander}) with one of the above-mentioned list. 

Given a list $L$, we denote as $L^k$ the element at position $k$ in the list $L$. 
\begin{definition}
Let ${\cal S}=\{L_1, ...,L_n\}$ be a set of $n$ lists, each formed by elements in $W=\{0, 1, \cdots, w \}$, with $w \geq n$. We say that $\cal S$ is $Q$-\emph{correlated} (where $Q$ is a set of positions in the lists) if the following three conditions hold:
\begin{enumerate}
\item
All the lists have the same length.
\item
All the elements are random values in $W$.
\item
For each two different $L_i$ and $L_j$ in ${\cal S} : L_i^k \neq L_j^k$, provided $k \in Q$.
\end{enumerate}
\end{definition}

The positions in $Q$ are called \emph{correlated} positions. Observe that the elements at position $k$ in these lists (i.e., $L_1^k L_2^k \cdots L_{n}^k)$ are either (1) different random numbers in $W$ if $k$ is a correlated position, or (2)  random numbers in $W$ if $k$ is not a correlated position (although these number may be different). Note that,  since the number of elements in $W$ is greater than the number of lists in $\cal S$, from a subset of lists is not possible to infer, with complete certainty, what the others will be, even if it is known which positions are correlated.

\subparagraph*{Example.} Let  ${\cal S}= \{\{1,2,0,0,3,2,3\}, \{2,1,3,0,0,0,2\},\{0,3,1,3,1,1,0\}, \{3,0,2,2,2,3,1\}\}$, with $W=\{0,1,2,3\}$. $\cal S$ is $Q$-correlated with $Q=\{1,2,3,5,6,7\}$, since all the lists have the same length and, at the same correlated positions, the elements take different values. On the contrary, $\cal S$ is not $Q$-correlated with $Q=\{3,4,5\}$, since the fourth element is the same in the first and second lists.\\

\begin{definition}
\label{def:consistent}
Let $v \in W=\{0, 1, \cdots, w \}$ and let ${\cal L}$  be a set of lists each formed by elements in $W$.  We say that the pair $(v,\cal L)$ is \emph{consistent} provided the following three conditions hold:
\begin{enumerate}
\item
All the lists in $\cal L$ have the same length.
\item
All the elements in the lists in $\cal L$ are random values in $W-\{v\}$.
\item
For each two  lists ${\cal L}_i$ and ${\cal L}_j$ in ${\cal L} :  {\cal L}_i^k \neq {\cal L}_j^k$, for all $k$.
\end{enumerate}
\end{definition}

Next, we will state two properties of the $Q$-correlated sets of  lists that will be key in the operation of the proposed agreement algorithm. Given a set of positions $R$, we denote as $L^R$  the list formed by the elements $L^k$ such that $k \in R$, maintaining these elements the same relative order as in $L$. Note that $L^R$ denotes a list of elements, whereas $L^k$ denotes an element.

\begin{property}
\label{property1}
Let $\cal S$ be a $Q$-correlated set of lists, each formed by elements in $W$. Let $v \in W$ and $L_i$ an arbitrary list in $\cal S$. Let $R \subseteq Q$ such that $L_i^k = v$ for all $k \in P$, and ${\cal L}$ a set of lists of the form $L_{j}^R$, where $j \neq i$. The pair $(v,\cal L)$ is consistent,
\end{property}
\begin{proof}
Clearly, all the lists in $\cal L$ have the same length. Since $\cal S$ is $Q$-correlated then the elements at the same positions in the lists in $\cal L$ are different. Furthermore, these values will be different from $v$ (since $v$ appears in $L_i^R$ in all positions). Therefore, the obtained pair will be consistent.
\end{proof}

\subparagraph*{Example.} By using the previous set $\cal S$ with $Q=\{1,2,3,5,6,7\}$, if we know the values of the list $L_1$ then, for $v=2$, we can choose $R=\{2,6\}$ and we  guarantee that any pair $(2,\cal L)$ (with $\cal L$ formed by $L_{j}^R$ lists, where $j \neq 1$) is consistent. 

\begin{property}
\label{property2}
Let $\cal S$ be a $Q$-correlated set of lists, each formed by elements in $W$. Assume that we don't know the values of some arbitrary list $L_i \in \cal S$ and which positions are correlated. Then, it is not possible to choose a set of lists $\cal L$ (not necessarily in $\cal S$), each formed by elements in $W$, and a set  $R$ of positions in these lists, such that the pair $(v, {\cal L}')$ where  ${\cal L}' \equiv \{{\cal L},L_i^R\}$
is guaranteed to be consistent. 
\end{property}
\begin{proof}
Since the number of elements in $W$ is greater than the number of lists in $\cal S$, we cannot identify with complete certainty which are all the correlated positions, even if we know the values of all the lists in $\cal S$, except $L_i$.

Then, assume that we choose $R$ such that it contains a non-correlated position $k$. Since that position is non-correlated, we are not guaranteed that the value at  position $k$ in $L_i^R$ won't be $v$, or any of the values at position $k$ in the lists in $\cal L$, which will make the pair $(v, {\cal L}')$ inconsistent. In other words, we cannot fully guarantee that the pair $(v, {\cal L}')$ will be consistent.
\end{proof}


\section{Distributing the $Q$-correlated lists}
\label{sec:distribution}

For the distribution of the $Q$-correlated lists among the parties, we assume that there is a honest independent \emph{quantum source device} (QSD) that will communicate with the parties through pairwise error-free quantum channels. A pairwise quantum channel is said to be \emph{error-free} provided it guarantees that there will be no change in the state of any sent particle due to the own channel, although there is no guarantee that such state could be tampered by third parties. That QSD will prepare and distribute a number of particles so that each party, by  measuring them, will obtain one list $Q$-correlated with the other parties' lists.  

Let $W=\{0, 1, \cdots, w \}$, with $w \geq n$ (where $n$ is the number of parties). The particles that will be distributed are of three types: 

\begin{enumerate}

\item
Particles in the following uniform random states: $\ket{\Psi_0}= \frac{1}{\sqrt{w+1}} \sum_{j=0}^{w} \ket{j}$. Clearly, the measured states of each particle will obtain a random uniform value in $W$.

\item
Particles in the following quantum entangled states: $\ket{\Psi_1}= \frac{1}{\sqrt{w+1}} \sum_{j=0}^{w} \ket{j \otimes j}$. Now, the measured  states of each one single-particle will obtain the same value in $W$.

\item
Particles in the following quantum entangled states:

$$\ket{\Psi^s_{i_1,i_2, \cdots, i_{q-1}}}_{q} = \frac{1}{\sqrt{d}} \sum_{j=0}^{d-1} e^\frac{2 \pi i j s}{d}  \ket{j} \otimes  \ket{j + i_1 \mod d} \otimes \cdots \otimes \ket{j + i_{q-1} \mod d} ,$$

where $q$, $i_1, \cdots, i_q \in \{0, 1, \cdots, d-1 \}$. If we take  $s=0$ then we have:

$$\ket{\Psi^0_{i_1,i_2, \cdots, i_{q-1}}}_{q}= \frac{1}{\sqrt{d}} \sum_{j=0}^{d-1}  \ket{j} \otimes  \ket{j + i_1 \mod d} \otimes \cdots \otimes \ket{j + i_{q-1} \mod d}.$$

Let us also we take $q=d=w+1$ and let us perform the measurements of the single-particle states in the base $MB=\{\ket{0}, \ket{1}, \cdots, \ket{w} \}$, denoting the measured state $\ket{0}$ as $0$, $\ket{1}$ as $1$, $\cdots$, $\ket{w}$ as $w$. As it has been shown in~\cite{PhysRevA.65.022304}, if the parameters $i_1, \cdots, i_{w}$ in $\ket{\Psi^0_{i_1,i_2, \cdots, i_{w}}}_{w+1}$ are different then each one of the $w+1$ single-particle measured states will obtain a different value in $W$.


\end{enumerate}

\begin{figure}
\begin{mdframed}
\begin{center}

\begin{enumerate}
\item
Let $W=\{0, 1, \cdots, w \}$, so that $w \geq n$ (where $n$ is the number of parties).

\item
For $t=1$ to $L$, where $L$ denotes the length of the lists, the QSD decides whether position $t$ in the lists will be correlated or not (that decision is taken at random):

\begin{enumerate}

\item
If position $t$ is chosen to be correlated then the QSD prepares $q$ particles in the entangled state $\ket{\Psi^0_{i_1,i_2, \cdots, i_{w}}}_{w+1}$ by taking  parameters with different values.   Then, the QSD  sends one particle to each party except the \emph{commander}, to whom it sends two particles. Furthermore, the QSD also sends a number of decoy correlated particles randomly interspersed with all the others.

\item
If position $t$ is chosen to be non-correlated then the QSD prepares and sends  one particle in the state $\ket{\Psi_0}$ to each party, except the \emph{commander}. In addition, the QSD prepares two particles in the entangled state  $\ket{\Psi_1}$  and sends them to the \emph{commander}. 

\end{enumerate}

\item
The QSD checks the decoy particles. If no tampering is detected, then move to the next step; otherwise, the distribution protocol is aborted.


\item
On the reception of the particles, each party (except the \emph{commander}) will measure their state and will generate a list with the obtained values.  

\item
On the reception of the particles, the \emph{commander} will measure their state and use the first particles of each received pair to generate its list. In addition, it will the use the second particles to detect whether a  positions is correlated or not: namely, a position is correlated when the values of each received pair of particles is different.

\end{enumerate}

\end{center}
\end{mdframed}
\caption{The algorithm to distribute the $Q$-correlated lists.}
\label{fig:list_distribution}
\end{figure}

Figure~\ref{fig:list_distribution} shows the full distribution process. Particles of type~1 and~2 will be used to provide uncorrelated values, whereas particles of type~3 will be used to provide correlated ones. This is because particles of type~3 are the only ones that guarantee that, when measured, their values will be different. While the values provided by particles of type~2 will be always the same, the values provided by particles of type~1 may or may not be different; however, if we use a large enough number of particles, we will guarantee with high probability that there will be some case where the values measured by two parties will be equal.



Although for the distribution of the $Q$-correlated lists it has been assumed that a honest QSD generates the particles, which are send to the parties through pairwise error-free quantum channels, if anyone obtains information about what the QSD transmits (e.g., the correlated positions or the state of the transmitted particles), such information could be used to generate consistent data and, therefore, to break the subsequent agreement process.  However, we can prevent the particles from being tampered by using a technique similar to the used in~\cite{IJTP:Quantum}, which is based on the well-known \emph{BB84} quantum key distribution protocol~\cite{BEN84}. Next, we succinctly outline how it works (see the referenced article for a more detailed description). First, the QSD generates a number of decoy correlated particles from $\{\ket{0},\ket{1}, \cdots, \ket{w}, F \ket{0}, F \ket{1}, \cdots,  F \ket{w} \}$, where $F$ is the discrete Fourier transform, and randomly insert them into the sent sequences. After the parties receive the sequences, they send the acknowledgements to the QSD, which announces the positions and bases of the decoy particles. Then, the parties measure the particles and return the measurement results to the QSD, which checks whether eavesdroppers exists in the quantum channels or not. That part of the protocol has only two possible outcomes: either use the distributed lists to reach the agreement, or abort.

\section{The $QBA(m)$ algorithm}
\label{sec:QBA-algorithm}

By using the algorithm introduced in the previous section, we can guarantee that each party will have one list of a $Q$-correlated set. Now, in this section we introduce an algorithm that, by using these lists, solves the BA problem in a classical manner without using any quantum resources. 

The code of the above mentioned algorithm, which we called $QBA(m)$, is shown in Fig.~\ref{fig:qalgorithm}. It assumes that the parties can communicate with one another by pairwise \emph{safe} classical channels Namely, we say that a classical channel is safe provided (i) every message that is sent is delivered correctly, (ii) the receiver of a message knows who sent it and (iii) the absence of a message can be detected.  However, since parties (including the \emph{commander}) can be dishonest, they can  send consistent or inconsistent data (see Definition~\ref{def:consistent}). This includes the case where one dishonest party sends consistent data to some parties and inconsistent data (or no data) to the rest. 

As it can be seen in the Step~1 of the algorithm, we require that the distributed lists be of \emph{sufficiently long} length. This requirement is introduced in order to avoid any casually created consistent pair, which can be guaranteed with high probability as we increase the length of the lists. \\

\begin{figure}
\begin{mdframed}
\begin{center}
\begin{enumerate}
\item
Use the algorithm in Figure~\ref{fig:list_distribution} to distribute among the parties a set of $Q$-correlated lists of \emph{sufficiently long} length. As a result, we have that:

\begin{enumerate}
\item
Each party has one list in a $Q$-correlated set of lists.
\item 
The \emph{commander} is the only party that knows which are the correlated positions.
\end{enumerate}

\item
Let $v \in W$ be the order to be transmitted by the \emph{commander} $c$ and let $\cal L=\{\}$. Then, he sends $(P,(v,\cal L))$ to each party $i$ through pairwise error-free classical channels, where $P$ is a list of correlated positions in $L_c$ in which $v$ appears (but not necessarily all the positions).

\item
For each party $i$ (except for the \emph{commander}):

\begin{enumerate}
\item
If it receives  $(P,(v,\cal L))$ from the \emph{commander}:
\begin{enumerate}
\item
Add $L_{i}^{P}$ to $\cal L$.
\item
If $(v,{\cal L})$ is consistent then:
\begin{enumerate}
\item
$V_i=v$
\item
Send $(P,(v,{\cal L}))$ to all the parties.
\end{enumerate}
\end{enumerate}

\item
For $m+1$ rounds (starting at round $1$), in each round perform: if at round $r$  it receives $(P,(v,\cal L))$:
\begin{enumerate}
\item
Add $L_{i}^{P}$ to $\cal L$.
\item
If $(v,\cal L)$ is consistent, $v \not\in V_i$ and the number of lists in $\cal L$ is $r+1$:
\begin{enumerate}
\item
Add $v$ to $V_i$.
\item
If $r \leq m$ then send $(P,(v,\cal L))$ to all the parties. 
\end{enumerate}
\end{enumerate}
\item
$V_i$ will be the same for all the honest parties, so they can decide the same.   
\end{enumerate}

\end{enumerate}

\end{center}
\end{mdframed}
\caption{The $QBA(m)$ algorithm for $m$ dishonest parties.}
\label{fig:qalgorithm}

\end{figure}


\begin{theorem}
The protocol $QBA(m)$ solves with high probability the Byzantine Agreement problem for $m$ dishonest parties.
\end{theorem}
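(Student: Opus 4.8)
The plan is to verify that $QBA(m)$ meets the two defining requirements of Byzantine Agreement, namely IC1 (all honest parties decide the same order) and IC2 (if the commander is honest, every honest party decides his order), and that it does so except on an event whose probability can be driven to $0$ by lengthening the lists. The architecture of the argument mirrors that of authenticated broadcast protocols of Dolev--Strong type: a party ``endorses'' a value $v$ by appending its own slice $L_i^P$ to the accumulator $\mathcal{L}$ and relaying, and the role played by unforgeable signatures there is played here by Properties~\ref{property1} and~\ref{property2}. I would therefore first isolate two lemmas. The \emph{append lemma} states that when $(v,\mathcal{L})$ is consistent and $P\subseteq Q$ are genuinely correlated positions on which $v$ does not occur in any honest party's list, then $(v,\mathcal{L}\cup\{L_j^P\})$ is again consistent for every honest $j$; this is essentially Property~\ref{property1}. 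The \emph{unforgeability lemma} states that a coalition of up to $m$ dishonest parties that knows neither an honest party $i$'s list nor (when the commander is honest) the correlated positions cannot, except with probability vanishing in the list length, hand $i$ a message $(P,(v,\mathcal{L}))$ that $i$ will find consistent after appending $L_i^P$ unless $v$ was legitimately seeded on correlated positions; this is the quantitative form of Property~\ref{property2}, together with the secrecy guaranteed by Step~1(b) and the ``sufficiently long'' requirement of Step~1.

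For IC2 I would argue as follows. The honest commander sets $P$ to correlated positions at which $L_c=v$; since these positions are correlated, every other party holds a value $\neq v$ there, so by Property~\ref{property1} each honest party $i$ finds $(v,\{L_i^P\})$ consistent at Step~3(a), sets $V_i=v$ with a single list in $\mathcal{L}$, and relays. It then remains to show that no honest party ever inserts a second value $v'\neq v$ into $V_i$: any message carrying $v'$ must originate with a dishonest party (the honest commander sent only $v$), so producing a slice that keeps $(v',\cdot)$ consistent against $i$'s secret list is exactly the forgery ruled out by the unforgeability lemma. Hence $V_i=\{v\}$ for every honest $i$ and all decide $v$.

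For IC1 with a possibly dishonest commander I would show that the honest parties compute identical sets $V_i$. The heart is the propagation claim: if an honest party $i$ adds $v$ to $V_i$ at round $r\le m$, it relays a consistent $(P,(v,\mathcal{L}))$ with $|\mathcal{L}|=r+1$, and by the append lemma every honest party that has not yet accepted $v$ accepts it at round $r+1\le m+1$. The remaining case $r=m+1$ is handled by the counting condition $|\mathcal{L}|=r+1$: a chain accepted at round $m+1$ carries $m+2$ endorsing slices, so among them at least one belongs to an honest party (there are at most $m$ dishonest ones); by unforgeability that honest party must have genuinely accepted $v$ at some earlier round $\le m$ and relayed it, reducing this case to the previous one. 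This is the familiar pigeonhole that makes $m+1$ rounds suffice, and it forces $v\in V_i \Leftrightarrow v\in V_j$ for all honest $i,j$.

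The step I expect to be the real obstacle is the append lemma in the dishonest-commander setting, because there the positions $P$ and the accumulator $\mathcal{L}$ are chosen adversarially, and one must rule out the scenario in which some honest $j$ happens to hold the relayed value $v$ at a position of $P$ (which would make $j$ reject while another honest party accepts, breaking IC1). I would close this gap by using unforgeability in the contrapositive: a consistent chain that already contains at least one honest slice can only have formed on genuinely correlated positions on which $v$ is absent from the honest parties' lists, since otherwise the dishonest parties would have had to fabricate a consistent slice for an unknown honest list, an event the list-length bound excludes with high probability. Making this quantitative---bounding the total failure probability (lucky forgeries together with accidental consistent pairs arising from the type-1 particles) by a union bound over values, parties and rounds, and then choosing the list length so that this bound is $o(1)$---is the final and most technical piece of the argument.
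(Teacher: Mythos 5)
Your proposal follows essentially the same route as the paper's proof: IC2 is obtained from Property~\ref{property1} (an honest party's slice extends a consistent pair) together with Property~\ref{property2} (dishonest parties cannot forge a slice that stays consistent against an unknown honest list), and IC1 is obtained by the relay-and-pigeonhole argument over $m+1$ rounds, in which a value accepted only at round $m+1$ must carry at least one honest endorsement and hence was already relayed at an earlier round. The one point where you go beyond the paper is your explicit isolation of the ``append lemma'' obstacle in the dishonest-commander case: the paper's proof of IC1 simply asserts (in its case~1 and in the first subcase of case~2) that a party $j$ receiving a relayed consistent tuple will still find it consistent after appending $L_j^P$, whereas a dishonest commander could choose $P$ to contain positions at which $v$ occurs in $L_j$, or non-correlated positions at which honest slices collide, so that $j$ rejects what $i$ accepted. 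You flag this, propose a contrapositive/high-probability repair, and honestly leave the quantitative union bound as the unfinished technical piece; the paper's own proof leaves that same step entirely unaddressed, so your write-up is, if anything, a more candid account of where the argument is fragile rather than a different proof.
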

\begin{proof}
\begin{description}
\item[Prove IC2:]
Assume the \emph{commander} is honest. So, every party will receive the same data from the \emph{commander}. Since no dishonest party can forge that data so that it also looks consistent (by Property~\ref{property2} and taking into account that the \emph{commander} is the only one party that knows which positions are correlated),  by  Property~\ref{property1}, the set $V_i$ (for each $i$) will always contain the same and unique value sent by the \emph{commander}. Therefore, all honest parties (at step 3(c)) will  decide the value sent by the \emph{commander}.

\item[Prove IC1:] Assume the \emph{commander} is dishonest. Two honest parties $i$ and $j$ decide the same provided $V_i$ and $V_j$ are the same when they take the decision (i.e., at step 3(c)). Therefore, we only need to prove that if $i$ adds $v$ to $V_i$ then $j$ also adds $v$ to $V_j$. That is, we have to show that $j$ will also receive a consistent tuple with the value $v$.

\begin{enumerate}
\item
If $i$ receives that value at step 3(a) then it sends it to $j$ in step 3(a)iiB, who will add it to $V_j$ (at step 3(b)iiA).
\item
If $i$ adds $v$ to $V_i$ at step 3(b)i then that's because it received at that round consistent data for that value. Now, we have two possibilities:

\begin{itemize}
\item
Party $i$ receives the data before round $m+1$: in this case, $i$ will send that value to $j$ (at step 3(b)iiB), who will add it to $V_j$ (at step 3(b)iiA). 

\item
Party $i$ receives the  data at round $m+1$: in this case, party $i$ won't send any data and, therefore, party $j$  won't receive data with that value. Since there is, at most, $m$ dishonest parties, to consider consistent data at round $m+1$, such a data must contain $m+1$ lists. However, all lists in $\cal L$ different that $L_{x}^{P}$ will make that data inconsistent. Indeed, let's assume that we add a list $L'$ different from $L_{x}^{P}$. Let $v'$ be a value that  appears at position $k$ in list $L'$. We know that, at that position,  there will be different values in the other parties' list (assuming that we know that it is a correlated position; otherwise is even simpler). However, we don't know the concrete values, at that position, in all the other parties' lists (note that $w \geq n$); so,  it could happen that $v'$ appears in another list at the same position, which will certainly happen if $P$ is long enough. Therefore, the addition of $L'$ to $\cal L$ will make the pair inconsistent. Consequently, one of the lists in $\cal L$ (i.e., $L_{x}^{P}$) must be from a honest party, who will have sent consistent data with the value $v$ to all the parties before round $m+1$.  Thus, $v$ will be already included both in $V_i$ and $V_j$.
\end{itemize}
\end{enumerate}
\end{description}

This completes the proof.
\end{proof}

We would like to note that, for the sake of clarity, we have presented  our BA algorithm as simple as possible. However, it can be optimized in some cases. For instance:
\begin{enumerate}
\item
Our algorithm requires $m+1$ rounds to finish, but it can be easily adapted to the case where $m < n/3$, so that the decision is made by using only one round (the approach is similar to that in~\cite{IJTP:Quantum}).

\item
If the absence of  messages can be detected, then it is possible to advance the decision making immediately after detecting that no message has been transmitted at a given round.
\end{enumerate}

\section{Open issues}
\label{sec:remarks}

\begin{enumerate}

\item
Whereas in this paper we assumed that the QSD is an independent device, perhaps the parties themselves could  be used to generate and send the particles. This technique has already been used by Gaertner et al~\cite{PhysRevLett.98.020503} in the case of three parties.

\item
Based on Hardy's correlations~\cite{PhysRevLett.68.2981} and entanglement swapping, the authors in~\cite{PhysRevA.92.042302} have presented a protocol for the original BA problem with three parties.  So, maybe that could also be used to avoid the possibility of abortion, during the distribution process, when considering several parties.
 
\end{enumerate}

\bibliographystyle{plain}
\bibliography{quantum}

\begin{thebibliography}{10}

\bibitem{10.1145/1060590.1060662}
Michael Ben-Or and Avinatan Hassidim.
\newblock Fast quantum byzantine agreement.
\newblock In {\em Proceedings of the Thirty-Seventh Annual ACM Symposium on
  Theory of Computing}, STOC'05, page 481:485, New York, NY, USA, 2005.
  Association for Computing Machinery.

\bibitem{BEN84}
C.~H. Bennett and G.~Brassard.
\newblock {Quantum cryptography: Public key distribution and coin tossing}.
\newblock In {\em Proceedings of IEEE International Conference on Computers,
  Systems, and Signal Processing}, pages 175--179, India, 1984.

\bibitem{dolev1983authenticated}
Danny Dolev and H.~Raymond Strong.
\newblock Authenticated algorithms for byzantine agreement.
\newblock {\em SIAM Journal on Computing}, 12(4):656--666, 1983.

\bibitem{PhysRevLett.87.217901}
Matthias Fitzi, Nicolas Gisin, and Ueli Maurer.
\newblock Quantum solution to the byzantine agreement problem.
\newblock {\em Phys. Rev. Lett.}, 87:217901, Nov 2001.

\bibitem{PhysRevLett.98.020503}
S.~Gaertner, C.~Kurtsiefer, M.~Bourennane, and H.~Weinfurter.
\newblock Experimental demonstration of four-party quantum secret sharing.
\newblock {\em Phys. Rev. Lett.}, 98:020503, Jan 2007.

\bibitem{PhysRevLett.100.070504}
Sascha Gaertner, Mohamed Bourennane, Christian Kurtsiefer, Ad\'an Cabello, and
  Harald Weinfurter.
\newblock Experimental demonstration of a quantum protocol for byzantine
  agreement and liar detection.
\newblock {\em Phys. Rev. Lett.}, 100:070504, Feb 2008.

\bibitem{PhysRevLett.68.2981}
Lucien Hardy.
\newblock Quantum mechanics, local realistic theories, and lorentz-invariant
  realistic theories.
\newblock {\em Phys. Rev. Lett.}, 68:2981--2984, May 1992.

\bibitem{10.1145/357172.357176}
Leslie Lamport, Robert Shostak, and Marshall Pease.
\newblock The byzantine generals problem.
\newblock {\em ACM Trans. Program. Lang. Syst.}, 4(3):382?401, July 1982.

\bibitem{PhysRevA.65.022304}
X.~S. Liu, G.~L. Long, D.~M. Tong, and Feng Li.
\newblock General scheme for superdense coding between multiparties.
\newblock {\em Phys. Rev. A}, 65:022304, Jan 2002.

\bibitem{IJTP:Quantum}
Qing-bin Luo, Kai-yuan Feng, and Ming-hui Zheng.
\newblock Quantum multi-valued byzantine agreement based on d-dimensional
  entangled states.
\newblock {\em International Journal of Theoretical Physics},
  58(12):4025--4032, 2019.

\bibitem{10.1145/322186.322188}
M.~Pease, R.~Shostak, and L.~Lamport.
\newblock Reaching agreement in the presence of faults.
\newblock {\em J. ACM}, 27(2):228--234, April 1980.

\bibitem{PhysRevA.92.042302}
Ramij Rahaman, Marcin Wie\ifmmode~\acute{s}\else \'{s}\fi{}niak, and Marek
  \ifmmode~\dot{Z}\else \.{Z}\fi{}ukowski.
\newblock Quantum byzantine agreement via hardy correlations and entanglement
  swapping.
\newblock {\em Phys. Rev. A}, 92:042302, Oct 2015.

\bibitem{Sun_2020}
Xin Sun, Piotr Kulicki, and Mirek Sopek.
\newblock Multi-party quantum byzantine agreement without entanglement.
\newblock {\em Entropy}, 22(10):1152, Oct 2020.

\bibitem{Quantum:Takavoli}
Armin Tavakoli, Ad\'an Cabello, Marek Zukowski, and Mohamed Bourennane.
\newblock Quantum clock synchronization with a single qudit.
\newblock {\em Scientific reports}, 5:7982, 01 2015.

\end{thebibliography}

\appendix
\section{Counter-examples}
\label{sec:counter-examples}

\begin{itemize}
\item
Takavoli et al.~\cite{Quantum:Takavoli}: This algorithm is intended to solve binary DBA. In the algorithm in Table~1, assume $P_1$ is faulty and sends consistent pairs to all the processes, so that all messages are the same, except one. Now assume that the process that receives the different message (which is also faulty) conveys its received pair to some processes, and $\perp$ to the rest: the processes that receive the pair will decide to abort (since they detect, by (iib), that $P_1$ is faulty), but those who receive $\perp$ will decide the value sent by $P_1$ (they apply (iid)). That is, non-faulty processes will decide different things.

Furthermore, the quantum protocol used for distributing the correlated lists has not been shown to be always correct. For instance, it could happen that a dishonest process reveals a fake encoding base (e.g., choosing it at random) so that, by chance, the sum of the basis choices modulo $m$ equals zero, while the sum of the right basis choices modulo $m$ is different from zero. In that case, the run would be treated as a valid distribution of the numbers at the same position in the private lists. That is enough to break the subsequent Byzantine agreement algorithm.

\item
Sun et al.~\cite{Sun_2020}: This algorithm is intended to solve multivalued DBA. At stage~2, assume that $P_1$ is faulty and sends consistent pairs to all the processes, so that all messages are the same, except one. Now, assume that the process that receives the different value (which is also faulty) conveys its received pair to some processes, and $\perp$ to the rest: the processes that receive the consistent pair will decide $\perp$ (they will apply 3(a)), but those who receive $\perp$ will decide the value send by $P_1$ (they will apply 3(c)). That is, non-faulty processes will decide different things.

\end{itemize}

%
%
%
%
%

\end{document}